\newtheorem{fact}[theorem]{Fact}
\begin{document}
\title{Scalable Approximation Algorithm for Graph Summarization}
\author{Maham Anwar Beg\and
        Muhammad Ahmad \and
        Arif Zaman \and
        Imdadullah Khan
}
\institute{Department of Computer Science, School of Science and Engineering, Lahore University of Management Sciences, Pakistan\\
\email{\{14030016,17030056,arifz,imdad.khan\}@lums.edu.pk}
}
\maketitle
\begin{abstract}
Massive sizes of real-world graphs, such as social networks and web graph, impose serious challenges to process and perform analytics on them. These issues can be resolved by working on a small summary of the graph instead . A summary is a compressed version of the graph that removes several details, yet preserves it's essential structure. Generally, some predefined quality measure of the summary is optimized to bound the approximation error incurred by working on the summary instead of the whole graph. All known summarization algorithms are computationally prohibitive and do not scale to large graphs.  
In this paper we present an efficient randomized algorithm to compute graph summaries with the goal to minimize {\em reconstruction error}. We propose a novel weighted sampling scheme to sample vertices for merging that will result in the least reconstruction error. We provide analytical bounds on the running time of the algorithm and prove approximation guarantee for our score computation. Efficiency of our algorithm makes it scalable to very large graphs on which known algorithms cannot be applied. We test our algorithm on several real world graphs to empirically demonstrate the quality of summaries produced and compare to state of the art algorithms. We use the summaries to answer several structural queries about original graph and report their accuracies.
\end{abstract}
\section{Introduction}\label{section:introduction}
Analysis of large graphs is a fundamental task in data mining, with applications in diverse fields such as social networks, e-commerce, sensor networks and bioinformatics. Generally graphs in these domains have very large sizes - millions of nodes and billions of edges are not uncommon. Massive sizes of graphs make processing, storing and performing analytics on them very challenging. These issues can be tackled by working instead on a compact version (summary) of the graph, which removes certain details yet preserves it's essential structure.

Summary of a graph is represented by a `supergraph' with weights both on edges and vertices. Each supernode of the summary, represents a subset of original vertices while it's weight represents the density of subgraph induced by that subset. Weights on edges, represent density of the bipartite graph induced by the two subsets. Quality of a summary is measured by the `reconstruction error', a norm of the difference of actual and reconstructed adjacency matrices. Another parameter adopted to assess summaries is the accuracy in answer to queries about original graph computed from summaries only. 

Note that, since there are exponentially many possible summaries (number of partitions of vertex set), finding the best summary is a computationally challenging task. GraSS \cite{lefevre2010grass} uses an agglomerative approach, where in each iteration a pair of nodes is merged until the desired number of nodes is reached. Since the size of search space at iteration $t$ is $O ({n(t)\choose 2})$, where $n(t)$ is number of supernodes at iteration $t$. GraSS randomly samples $O(n(t))$ pairs and merges the best pair (pair with the least score) among them. With the data structures of GraSS, merging and evaluating score of a pair can be done in $O(\Delta(t))$ (maximum degree of the summary in iteration $t$). This results in the overall worst case complexity of $O(n^2\Delta)$ to compute a summary with $O(n)$ nodes. $S2L$\cite{riondato2014graph} on the other hand uses a clustering technique for Euclidean space by considering each vertex as an N-dimensional vector. The complexity of this algorithm is $O(n^2t)$ to produce a summary of a fixed size $k=O(n)$, where $t$ is the number of iterations before convergence.

In this paper we take the agglomerative approach to compute summary of any desired size. In every iteration a pair is chosen for merging from a randomly chosen sample. We derive a closed form formula for reconstruction error of the graph resulting after merging a pair. Exact computation of this score takes $O(\Delta)$ time but with constant extra space per node, this can be approximated in constant time with bounded error. Furthermore, we define weight of each node that can be updated in constant time and closely estimate the contribution of a node to score of pairs containing it. We select a random sample of pairs by selecting nodes with probability proportional to their weights, resulting in samples of much better quality. We establish that with these weights, logarithmic sized sample yields comparable results. The overall complexity of our algorithm comes down to $O(n(\log n + \Delta))$. Our approach of sampling vertices according to their weights form a dynamic graph (where weights are changing) may be of independent interest. We evaluate our algorithm on several benchmark real world networks and demonstrate that we significantly outperform GraSS\cite{lefevre2010grass} and $S2L$\cite{riondato2014graph} both in terms of running time and quality of summaries. 

The remaining paper is organized as follows. Section \ref{section:related_work} discusses previous work on graph summarization and related problems. In section \ref{section:problem_definition} we formally define the problem with it's background. We present our algorithm along with it's analysis in section \ref{section:algorithm}. In section \ref{section:experiments} we report results of experimental evaluation of our algorithm on several graphs. We also provide comparisons with existing solutions both in terms of runtime and quality.
\section{Related Work}\label{section:related_work}
Graph summarization and compression is a widely studied problem and has applications in diverse domains. There are broadly two types of graph summaries (represented as supergraphs described above): {\em lossless} and {\em lossy}. The original graph can be exactly reconstructed from a lossless summary, hence the goal is to optimize the space complexity of a summary \cite{storer1988data}. The guiding principle here is that of {\em minimum description length} MDL \cite{rissanen1978modeling}. It states that minimum extra information should be kept to describe the summarized data. In lossless summary  \cite{navlakha2008graph,khan2015set}, {\em edge-corrections} are stored along with each supernode and super edges to identify missing edges. \cite{koutra2014vog} stores information about structures like {\em cliques, stars, and chains} formed by subgraphs as lossless summary of a graph.

Lossy compression, on the other hand, compromises some detailed information to reduce the space complexity. There is a trade off  between quality and size of the summary. Quality of a summary is measured by a norm of difference between original adjacency matrix and the adjacency matrix reconstructed from the summary, known as reconstruction error. \cite{lefevre2010grass} adopted an agglomerative approach to greedily merge pairs of nodes to minimize the $l_1$-reconstruction error. Runtime of their algorithm amounts $O(n^3)$ in the worst case. 

In \cite{riondato2014graph}, each node is considered a vector in $\mathbb{R}^n$ (it's row in the adjacency matrix) and point-assignment clustering methods (such as $k$-means) are employed. Each cluster is considered a supernode and the goal is to minimize the $l_2$-reconstruction error. The authors suggest to use dimensionality reduction techniques for points in $\mathbb{R}^n$. This technique does not use any structural information of the graph.
In \cite{zhuang2016data} social contexts and characteristics are used to summarize social networks. Summarization of edge-weighted graphs is studied in \cite{toivonen2011compression}. Graph compression techniques relative to a certain class of queries on labeled graphs is studied in \cite{fan2012query}. \cite{liu2012approximate} uses entropy based unified model to make a homogeneous summary for labeled graphs. Compression of web graphs and social networks is studied in \cite{boldi2004webgraph,adler2001towards} and \cite{chierichetti2009compressing}, respectively.See \cite{liu2016graph} for detailed overview of graph summarization techniques.

A closely related area is that of finding clusters and communities in a graph using  iterative algorithms \cite{newman2004finding}, agglomerative algorithms \cite{clauset2004finding} and spectral techniques \cite{white2005spectral}. Identification of web communities in web graphs using maximum flow/minimum cut problem is discussed in \cite{flake2000efficient}.
\section{Problem Definition}\label{section:problem_definition} Given a graph $G = (V,E)$ on $n$ vertices, let $A$ be it's adjacency matrix. For $k \in \mathbb{Z}$, a summary of $G$, $S=(V_S,E_S)$ is a weighted graph on $k$ vertices. Let $V_S = \{V_1,\ldots,V_k\}$, each $V_i\in V_S$ is referred to a supernode and represents a subset of $V$. More precisely, $V_S$ is a partition of $V$, i.e. $V_i\subset V$ for $1\leq i\leq k$, $V_i\cap V_j =\emptyset$ for $i\neq j$ and $\bigcup_{i=1}^k V_i = V$. Each supernode $V_i$ is associated with two integers $n_i=|V_i|$ and $e_i= |\{(u,v)| u,v \in V_i, (u,v) \in E \}|$. For an edge $(V_i,V_j)\in E_S$ (known as superedge), let $e_{ij}$ be the number of edges in the bipartite subgraph induced between $V_i$ and $V_j$, i.e.	$e_{ij}= |\{(u,v)|u \in V_i, v \in V_j, (u,v) \in E \}|$.
Given a summary $S$, the graph $G$ is approximately reconstructed by the expected adjacency matrix, $\bar{A}$, where $\bar{A}$ is a $n\times n$ matrix with $$\bar{A}(u,v) = \begin{cases} 0 & \text{ if } u = v\\ \frac{e_i}{{n_i\choose 2}} & \text{ if } u,v\in V_i\\ \frac{e_{ij}}{n_in_j} & \text{ if } u\in V_i, v\in V_j \end{cases}$$
The quality of a summary $S$ is assessed by $l_p$-norm of element-wise difference between $\bar{A}$ and $A$.
\begin{figure}
	\centering
	\includegraphics[width=\textwidth]{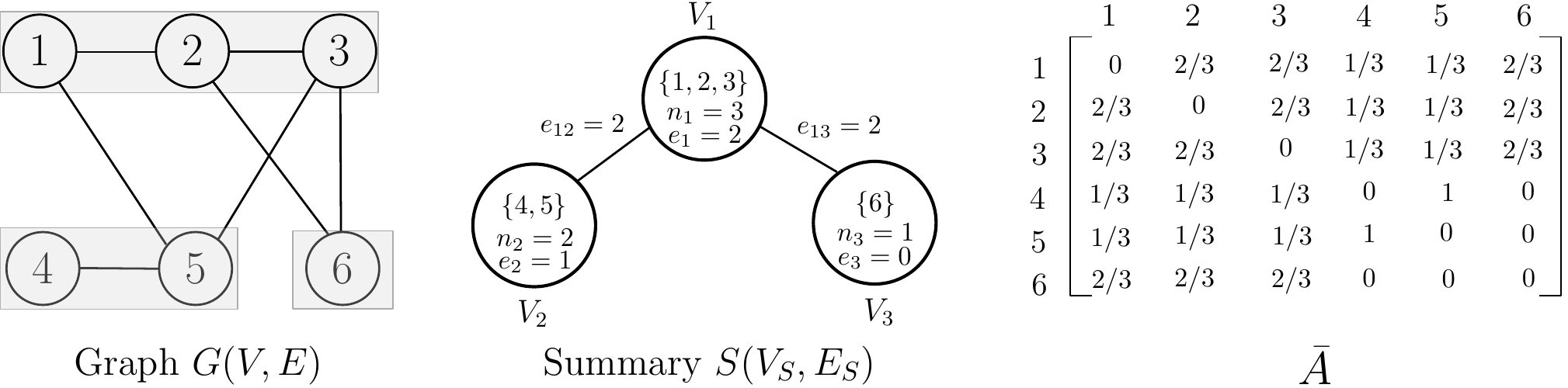}
\end{figure}

\begin{definition}($l_p$-Reconstruction Error ($RE_p$)): The (unnormalized) $l_p$ reconstruction error of a summary $S$ of a graph $G$ is \begin{equation}
		RE_p(G|S) =  RE_p(A | \bar{A}) =  \left(\sum_{i=1}^{| V |} \sum_{j=1}^{| V |} |\bar{A} (i,j) - A(i,j)|^p\right)^{1/p} \label{eq:RE}\end{equation}
\end{definition}
Note that the case $p=1$ considered in \cite{lefevre2010grass} and $p=2$ considered in \cite{riondato2014graph} are closely related to each other. In this paper we use $p=1$ and refer to $RE_1(G|S)$ as $RE(G|S)$. A simple calculation shows that $RE(G|S)$ can be computed in the following closed form.
\begin{equation} \label{eq:RE_closedForm}
	RE(G|S) =  RE(A | \bar{A}) = \sum_{i=1}^{k}4e_i - \frac{4e_i^2}{\binom{n_i}{2}}  + \sum_{i=1}^{k}\sum_{j=1,j\neq i}^{k}  2e_{ij} - \frac{2e_{ij}^2}{n_in_j}
\end{equation}
Formally, we address the following problem. 
\begin{problem}	\label{problem:1}
	Given a graph $G(V,E)$ and a positive integer $k \leq |V|$, find a summary $S$ for $G$ with $k$ super nodes such that $RE(G|S) $ is minimized.
\end{problem}
Another measure to assess quality of a summary $S$ of $G$ is by the accuracy of  answers of queries about structure of $G$ based on $S$ only. In the following we list how certain queries used in the literature are answered from $S$. 

\textbf{Adjacency Queries:} Given two vertices $u,v \in V$, the query whether $(u,v)\in E$ is answered with $\bar{A}(u,v)$. This can either be interpreted as the expected value of an edge being present between $u$ and $v$ or as returning a `yes' answer based on the outcome of a biased coin.

\textbf{Degree Queries:} Given a vertex $v\in V$, the query about degree of $v$ is answered as $\bar{d}(v) =  \sum_{j=1}^{n} \bar{A}(v,j)$.

\textbf{Eigenvector-Centrality Queries:} Eigenvector-centrality of a vertex $v$, $p(v)$ measures the relative importance of $v$ \cite{motwani2010randomized}. For a vertex $v\in V$, this query is answered as $\bar{p}(v)=\frac{\bar{d}(v)}{2| E |}$.

\textbf{Triangle density queries:} Let $t(G)$ be the number of triangles in $G$. $t(G)$ is estimated from $S$ by counting the expected number of triangles within each super node, the expected number of triangles made with one vertex in one supernode and two in another, and that made with one vertex each from three different super nodes. More precisely, this query is answered as follows. Let $\pi_{i} = \frac{e_i}{\binom{n_i}{2}}$ and $\pi_{ij} = \frac{e_{ij}}{n_in_j}$, then $\bar{t}(G)$, the estimate for $t(G)$, is  $$\sum\limits_{\substack{i=1}}^{k} \left[  
\binom{n_i}{3}\pi_{i}^3 + \sum\limits_{\substack{j=i+1}}^{k} \left(\pi_{ij}^2\left[\binom{n_i}{2}n_j\pi_{i}   + \binom{n_j}{2}n_i\pi_{j}  \right] + \sum\limits_{\substack{l=j+1}}^{k}n_in_jn_l\pi_{ij}\pi_{jl}\pi_{il} \right)\right].$$
\section{Algorithm}\label{section:algorithm}
Given a graph $G$ and an integer $k$ our algorithm produces a summary $S$ on $k$ super nodes as follows. Let $S_{t-1}$ be the summary before iteration $t$ with $n(t-1)$ super nodes, i.e. $S_0 = G$, and let $\bar{A}_t$ be the expected adjacency matrix of $S_t$. For $1\leq t \leq n-k$, we select a pair of supernodes $(u,v)$  and merge it to get $S_{t}$. To select an approximately optimal pair thie weight of each node $v$ that closely estimate the contribution of  node to score of pairs $(v,*)$. We randomly sample nodes for each pair with probability proportional to their weights and evaluate score of the pairs. We derive a closed form formula to evaluate score of a pair. Furthermore, in is form these scores can be approximately computed very efficiently. Based on approximate score we select the best pair in the sample and merge it to get $S_{t}$. In what follows, we discuss implementation of each of these subroutines and their analyses. 
\begin{lemma}\label{lem:merge}
	A pair $(u,v)$ of nodes in $S_t$, can be merged to get $S_{t+1}$ in time $O(deg(u) + deg(v))$.  
\end{lemma}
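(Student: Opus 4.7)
The plan is to describe an explicit merge procedure and bound its cost step by step, using the closed-form expression \eqref{eq:RE_closedForm} as a guide for what quantities need to be maintained. I assume that the summary $S_t$ is stored as an adjacency-list representation of the supergraph in which every supernode $u$ keeps $n_u$, $e_u$, and a hash table of its superedges $(u, x) \mapsto e_{ux}$; this lets us look up whether two supernodes are adjacent and fetch/update $e_{ux}$ in expected $O(1)$ time.

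First I would create a fresh supernode $w$ to represent $V_u \cup V_v$, and in constant time set $n_w = n_u + n_v$ and $e_w = e_u + e_v + e_{uv}$, pulling $e_{uv}$ out of $u$'s hash table (or 0 if $(u,v)\notin E_{S_t}$). Next, I would walk through the neighbor list of $u$: for each neighbor $x \neq v$, I insert the entry $(w,x) \mapsto e_{ux}$ into $w$'s table and update $x$'s table by deleting the entry for $u$ and inserting one for $w$. Then I do the analogous walk through $v$'s neighbors: for each $x \neq u$, if $x$ is already adjacent to $w$ (because it was a common neighbor encountered in the first pass), I add $e_{vx}$ to the existing $e_{wx}$ in both $w$'s and $x$'s tables; otherwise I create a new superedge $(w,x)$ with weight $e_{vx}$ and update $x$'s table accordingly. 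Finally I delete $u$ and $v$ and their remaining edge records.

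The cost bookkeeping is straightforward: initializing $w$ is $O(1)$; the first pass does $O(1)$ work per neighbor of $u$ for a total of $O(\deg(u))$; the second pass does $O(1)$ work per neighbor of $v$ for a total of $O(\deg(v))$; deleting $u$ and $v$ themselves takes $O(\deg(u) + \deg(v))$ because we must remove their entries from each neighbor's hash table, but those neighbor visits have already been charged above. Summing gives $O(\deg(u) + \deg(v))$.

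The only subtlety I foresee is the common-neighbor case, which is exactly what forces the two-pass structure rather than independent processing of $u$'s and $v$'s edge lists; handled correctly, common neighbors are visited once in each pass, contributing $O(1)$ to each of $\deg(u)$ and $\deg(v)$ and so not breaking the bound. A minor point that should be mentioned is the dependence on hash-table performance: the $O(1)$ per-edge operation is in expectation (or worst case with a suitable deterministic dictionary), and if one prefers worst-case guarantees the same bound holds with sorted neighbor lists and a merge-style second pass. With this accounting the lemma follows.
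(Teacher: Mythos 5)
Your proof is correct, but it secures the crucial constant-time-per-neighbor step by a different mechanism than the paper. Both arguments confront the same obstacle, which you correctly identify: merging $u$ and $v$ requires updating the adjacency records stored \emph{at each neighbor} $x$, and in a plain adjacency-list representation locating $u$'s entry inside $x$'s list could cost $O(\deg(x))$, potentially blowing the total up to a traversal of all edges. You resolve this with per-node hash tables mapping neighbors to superedge weights, giving expected $O(1)$ lookup, deletion, and insertion; the paper instead resolves it deterministically with cross-pointers: in an $O(|E|)$ preprocessing pass at initialization, the entry for $y$ in $x$'s adjacency list stores a pointer to the corresponding entry for $x$ in $y$'s list, so the update at each neighbor of $u$ and $v$ is worst-case $O(1)$ with no randomization. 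Your route buys a self-contained dictionary-based implementation, and your two-pass treatment of common neighbors (accumulating $e_{ux}+e_{vx}$ into $e_{wx}$) is more explicit about weight maintenance than the paper's proof, which leaves that bookkeeping implicit; the paper's route buys a deterministic worst-case bound at the cost of one-time $O(|E|)$ preprocessing and one extra pointer per edge. One caveat: your fallback claim that ``the same bound holds with sorted neighbor lists and a merge-style second pass'' does not obviously hold, since locating and replacing the entries for $u$ and $v$ in each neighbor's sorted list costs $O(\log \deg(x))$ per neighbor (and insertion into an array-backed sorted list can cost more), so for a genuine worst-case guarantee you would need either the paper's pointer trick or a worst-case dictionary rather than sorted lists.
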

\begin{proof}
	In the adjacency list format, one needs to iterate over neighbors of each $u$ and $v$ and record their information in a new list of the merged node. However, updating the adjacency information at each neighbor of $u$ and $v$ could potentially lead to traversal of all the edges. To this end, as a preprocessing step, for each $(x,y)$, in the adjacency list of $x$ at node $y$, we store a pointer to the corresponding entry in the adjacency list of $y$. With this constant (per edge) extra book keeping we can update the merging information at each neighbor in constant time by traversing just the list of $u$ and $v$. It is easy to see that this preprocessing can be done in time $O(|E|)$ once at the initialization. \qed \end{proof}
The next important step is to determine the {\em goodness} of a pair $(a,b)$. This can be done by temporarily merging $a$ and $b$ and then evaluating \eqref{eq:RE} or \eqref{eq:RE_closedForm} respectively taking $O(n^2)$ and $O(n(t))$. For a pair of nodes $(a,b)$ in $S_{t-1}$, let $S_t^{a,b}$ be the graph obtained after merging $a$ and $b$. We define score of a pair $(a,b)$ to be \begin{align} \label{eq:score} score_t(a,b)&=RE(G|S_{t-1}) - RE(G|S_t^{a,b}) \notag \\
	&= -\frac{4e_a^2}{\binom{n_a}{2}}  - \sum\limits_{\substack{i=1\\i \neq a}}^{n(t)} \frac{4e_{ai}^2}{n_an_i}  +\frac{4e_{ab}^2}{n_an_b}  - \frac{4e_b^2}{\binom{n_b}{2}}-  \sum\limits_{\substack{i=1\\i \neq b}}^{n(t)}\frac{4e_{bi}^2}{n_bn_i} \notag\\ 
	&+ \frac{4\big(e_a+e_b+e_{ab}\big)^2}{\binom{n_a+n_b}{2}} +\frac{4}{\big(n_a+n_b\big)}\sum\limits_{\substack{i=1\\i \neq a,b}}^{n(t)}\Big(\frac{e_{ai}^2}{n_i}+ \frac{e_{bi}^2}{n_i} + \frac{2e_{ai}e_{bi}}{n_i}\Big) 
\end{align}
\begin{fact}
	Since $S_{t-1}$ is fixed, minimizing $RE(G|S_t^{a,b})$ is equivalent to maximizing $score_t(a,b)$. 
\end{fact}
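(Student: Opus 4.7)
The fact is essentially a tautology once one reads the definition in \eqref{eq:score}, so my plan is very short. The approach is to isolate the dependence on the pair $(a,b)$ in each of the two quantities involved and observe that they differ by a constant that does not depend on the choice of pair.

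First, I would simply recall the first line of \eqref{eq:score}: by definition $score_t(a,b)=RE(G|S_{t-1})-RE(G|S_t^{a,b})$. Since the summary $S_{t-1}$ is, by hypothesis, fixed at the moment we are choosing the pair to merge, the quantity $RE(G|S_{t-1})$ is a constant $C$ independent of the pair $(a,b)$ under consideration. Thus $score_t(a,b)=C-RE(G|S_t^{a,b})$, and maximizing the left-hand side over pairs $(a,b)$ is equivalent to minimizing $RE(G|S_t^{a,b})$ over the same set of pairs. Formally, $\arg\max_{(a,b)} score_t(a,b)=\arg\min_{(a,b)} RE(G|S_t^{a,b})$, which is the claimed equivalence.

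The only substantive thing one might want to double-check, though it is not strictly part of this fact, is that the closed-form second expression for $score_t(a,b)$ in \eqref{eq:score} actually equals $RE(G|S_{t-1})-RE(G|S_t^{a,b})$. This would be a mechanical verification: apply \eqref{eq:RE_closedForm} to both $S_{t-1}$ and $S_t^{a,b}$ and cancel the terms corresponding to supernodes $i\notin\{a,b\}$ that are unaffected by the merge. The terms involving $a$, $b$, and their super-edges to other nodes contribute to $RE(G|S_{t-1})$ with the negative signs shown, while the merged node contributes the positive terms involving $(n_a+n_b)$ and $(e_a+e_b+e_{ab})$. Collecting these yields exactly \eqref{eq:score}. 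The main bookkeeping obstacle here is handling the cross term $\frac{2e_{ai}e_{bi}}{n_i}$, which arises from expanding $(e_{ai}+e_{bi})^2$ when the two super-edges to a common neighbor $i$ are combined into a single super-edge after merging. Once that expansion is done carefully, everything telescopes and the equivalence claimed in the fact follows directly from the constancy of $RE(G|S_{t-1})$.
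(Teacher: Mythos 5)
Your proof is correct and matches the paper's (implicit) reasoning exactly: the fact follows immediately from the definition $score_t(a,b)=RE(G|S_{t-1})-RE(G|S_t^{a,b})$ in \eqref{eq:score}, since $RE(G|S_{t-1})$ is a constant independent of the pair, and the paper offers no further argument. Your additional remark about verifying the closed form against \eqref{eq:RE_closedForm}, including the cross term $\frac{2e_{ai}e_{bi}}{n_i}$, is sound but, as you note, not part of this fact.
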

\begin{remark}\label{remark_compute_Score}
	Except for the last summation in \eqref{eq:score} all other terms of $score_t(a,b)$ can be computed in constant time. Since $n_a,n_b,e_a,$ and $e_b$ are already stored at $a$ and $b$, this can be achieved by storing an extra real number $D_a$ at each super node $a$ such that, $D_a = \sum_{\substack{i=1\\i \neq a}}^{n(t)} \frac{e_{ai}^2}{n_i}$. Note that $D_a$ can be updated in constant time after merging of any two vertices $x,y \neq a$, i.e.  after merging $x,y$, while traversing their neighbors for $a$ we subtract $e_{xa}/n_x$ and $e_{ya}/n_y$ from $D_a$ and add back $(e_{x}+e_y)/(n_x+n_y)$ to it. This value can be similarly updated at the merged node too. \end{remark}
The last summation in \eqref{eq:score}, $\sum_{\substack{i=1\\i \neq a,b}}^{n(t)}\frac{2e_{ai}e_{bi}}{n_i}$, in essence is the inner product of two $n(t)$ dimensional vectors ${\cal A}$ and ${\cal B}$, where the $i^{th}$ coordinate of ${\cal A}$ is $\frac{e_{ai}}{\sqrt{n_i}}$ (${\cal B}$ is similarly defined). Storing these vectors will take $O(n(t))$, moreover computing score will take time $O(n(t))$. However, $\left<{\cal A},{\cal B}\right> = {\cal A}\cdot {\cal B}$ can be very closely approximated with a standard application of {\em count-min sketch} \cite{CormodeM05}.

\begin{theorem}\label{thm:CMSketch} (c.f \cite{CormodeM05} Theorem 2) 
	For $0< \epsilon, \delta < 1$, let $\left<\widehat{ {\cal A},{\cal B} }\right>$ be the estimate for $\left<{\cal A},{\cal B}\right>$ using the count-min sketch. Then 
	\begin{itemize}
		\item $\left<\widehat{ {\cal A},{\cal B} }\right> \geq \left<{\cal A},{\cal B}\right>$
		\item $Pr[\left<\widehat{ {\cal A},{\cal B} }\right> < \left<{\cal A},{\cal B}\right> + \epsilon ||{\cal A}||_1||{\cal B}||_1] \geq 1-\delta$.
		${{\cal A},{\cal B}}$
	\end{itemize}
	Furthermore, the space and time complexity of computing $\left<\widehat{ {\cal A},{\cal B} }\right>$ is $O(\frac{1}{\epsilon}\log \frac{1}{\delta})$. While after a merge, the sketch can be updated in time $O(\log\frac{1}{\delta})$.
\end{theorem}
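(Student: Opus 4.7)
The plan is to reproduce the standard count-min sketch construction and analysis, since the statement is essentially a restatement of Theorem~2 of \cite{CormodeM05}. First I would set up the sketch: pick $w = \lceil e/\epsilon \rceil$ and $d = \lceil \ln(1/\delta) \rceil$, choose $d$ independent pairwise-independent hash functions $h_1,\ldots,h_d:[n(t)]\to[w]$, and maintain two $d\times w$ tables $C_{\cal A}$ and $C_{\cal B}$, where $C_{\cal A}[j,k]=\sum_{i:h_j(i)=k}{\cal A}_i$ and similarly for $C_{\cal B}$. Both ${\cal A}$ and ${\cal B}$ have non-negative entries (the $i$-th coordinate is $e_{ai}/\sqrt{n_i}$), which is the crucial property that makes one-sided error possible. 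The estimator is defined as $\langle\widehat{{\cal A},{\cal B}}\rangle = \min_{j\in[d]}\sum_{k=1}^{w} C_{\cal A}[j,k]\cdot C_{\cal B}[j,k]$.

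Next I would analyze a single row. Expanding the product in row $j$ gives
\[
\sum_{k=1}^{w} C_{\cal A}[j,k]\cdot C_{\cal B}[j,k] \;=\; \langle{\cal A},{\cal B}\rangle \;+\; \sum_{i\neq i':\,h_j(i)=h_j(i')} {\cal A}_i {\cal B}_{i'},
\]
and because every ${\cal A}_i,{\cal B}_{i'}\geq 0$, the collision sum is non-negative; taking the minimum over $j$ preserves this, establishing the first bullet $\langle\widehat{{\cal A},{\cal B}}\rangle \geq \langle{\cal A},{\cal B}\rangle$. For the second bullet, pairwise independence of $h_j$ yields $\Pr[h_j(i)=h_j(i')] = 1/w$ for $i\neq i'$, so the expected value of the collision sum in row $j$ is at most $\|{\cal A}\|_1\|{\cal B}\|_1/w \leq (\epsilon/e)\|{\cal A}\|_1\|{\cal B}\|_1$. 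By Markov's inequality the probability that a given row exceeds $\epsilon\|{\cal A}\|_1\|{\cal B}\|_1$ is at most $1/e$, and by independence of the $d$ hash functions the probability that \emph{all} rows exceed this threshold is at most $e^{-d}\leq \delta$. This gives the tail bound.

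The resource accounting is routine: the sketch stores $2wd = O(\tfrac{1}{\epsilon}\log\tfrac{1}{\delta})$ counters; a merge of two supernodes translates into updating one entry per row in each of the two sketches, costing $O(d) = O(\log\tfrac{1}{\delta})$ time; and computing the estimate requires the $wd$ multiplications above, which is $O(\tfrac{1}{\epsilon}\log\tfrac{1}{\delta})$. The only mildly delicate point, and the one I would be most careful to articulate, is the non-negativity argument: if the entries of ${\cal A}$ or ${\cal B}$ could be negative, the collision terms would not telescope in one direction and the first bullet would fail. Here non-negativity is automatic from the definition ${\cal A}_i = e_{ai}/\sqrt{n_i}$, so the hypothesis of the cited Cormode--Muthukrishnan result applies verbatim, and the proof is complete.
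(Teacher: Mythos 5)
Your reconstruction is correct and is essentially the standard Cormode--Muthukrishnan argument: the paper itself offers no proof of this statement, citing it directly as Theorem~2 of \cite{CormodeM05}, and your one-sided bound via non-negativity of the coordinates $e_{ai}/\sqrt{n_i}$, the per-row Markov bound with $w=\lceil e/\epsilon\rceil$, and the $e^{-d}\leq\delta$ amplification over independent rows match the cited source's proof. Your resource accounting is also right, including the one point specific to this paper --- that a merge of two supernodes amounts to a constant number of point updates to each sketched vector, each costing $O(d)=O(\log\frac{1}{\delta})$.
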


Hence, for a pair of nodes $(a,b)$ in $S_{t-1}$, $score_t(a,b)$ can be closely approximated in constant time. Note that the bounds on time and space complexity, though constants are quite loose in practice.

The next important issue the quadratic size of search space. This is a major hurdle to scalability to large graphs. We define weight of a node $a$ as \begin{equation}
	\label{eq:weight}
	f(a) = -\dfrac{4e_a^2}{\binom{n_a}{2}}  - \sum\limits_{\substack{i=1\\i \neq a}}^{n(t)} \dfrac{4e_{ai}^2}{n_an_i} \hskip.5in w(a) =\begin{cases}
		\dfrac{-1}{f(a)} & \text{ if } f(a)\neq 0\\
		0 & \text{ otherwise}
\end{cases} \end{equation}
We select pairs by sampling nodes according to their weights so as the pairs selected will likely have higher scores. With this weighted sampling a sample of size $O(\log n)$ outperforms a random sample of size $O(n)$. Let $W= \sum_{i=1}^{n(t)} w(i)$ be the sum of weights, we select a vertex $w$ with probability $w(a)/W$. Weighted sampling though can be done in linear time at a given iteration. In our case it is very challenging since the population varies in each iteration; two vertices are merged into one and weights of some nodes also change. To overcome this challenge, we design special data structure ${\cal D}$ that has the following properties.
\begin{claim}\label{claim:dataStructure}
	${\cal D}$ can be implemented as a binary tree such that
	\begin{enumerate}[i.]
		\item it can be initially populated in $O(n)$,
		\item a node can be sampled with probability proportional to it's weight in $O(\log n)$
		\item inserting, deleting or updating a weight in ${\cal D}$ takes time $O(\log n)$. 
	\end{enumerate}  
\end{claim}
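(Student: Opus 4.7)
Proof plan. I would realize $\mathcal{D}$ as a \emph{weight-augmented complete binary tree} (essentially a segment tree). At initialization, allocate a complete binary tree with $n$ leaves, one per supernode, and store the initial weight $w(a)$ from \eqref{eq:weight} at the corresponding leaf. At every internal node $x$, store the sum $\sigma(x)$ of the weights in the subtree rooted at $x$. Populate bottom-up: the leaves are filled in $O(n)$, and each internal sum is computed in $O(1)$ from its two children. Since a complete binary tree on $n$ leaves has $O(n)$ internal nodes, the whole initialization is $O(n)$, giving (i).

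For (ii), sample a real number $r$ uniformly in $[0,W)$, where $W=\sigma(\text{root})$, and walk from the root to a leaf: at an internal node $x$ with children $x_L,x_R$, descend into $x_L$ if $r<\sigma(x_L)$ and otherwise set $r \leftarrow r-\sigma(x_L)$ and descend into $x_R$. The leaf reached is selected. A standard induction on depth shows that leaf $a$ is returned with probability $w(a)/W$, as required. The walk touches one node per level, and the tree has depth $O(\log n)$, so the sampling cost is $O(\log n)$.

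For (iii), updating the weight of a single node $a$ means overwriting its leaf value and then refreshing $\sigma$ at each of its ancestors, of which there are $O(\log n)$; each refresh is a single addition, so the total cost is $O(\log n)$. Deletion of $a$ is the same operation with new weight $0$, and insertion uses any currently-zero leaf as a free slot (filling its weight and updating ancestor sums). Since the algorithm only ever merges two supernodes into one, the number of active leaves is monotonically non-increasing, so the $n$ slots allocated at initialization always suffice, and we never need to grow the tree or rebalance it. Each merge then costs $O(\log n)$ across its constant number of weight changes, giving (iii).

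The only subtle point — and the step I expect to be the main obstacle to state cleanly — is argument (iii) in the presence of changing weights. In our setting, a single merge of $x,y$ can, via the $D_\bullet$ bookkeeping of Remark \ref{remark_compute_Score}, alter $w(\cdot)$ at every neighbor of $x$ or $y$, so the $O(\log n)$ bound in the claim is \emph{per weight change}, not per merge. I would state this explicitly and observe that the total update cost charged to a merge is $O((\deg(x)+\deg(y))\log n)$, which is compatible with the overall $O(n(\log n+\Delta))$ runtime claimed in the introduction. Because we pre-allocated the tree structure and never need to rebalance it, no balanced-BST machinery (AVL, red–black, treap rotations) is required, which keeps the constant factors small and the proof short.
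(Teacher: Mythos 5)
Your proposal is correct and matches the paper's own construction essentially verbatim: a pre-allocated binary tree with leaf weights and internal subtree sums (the paper's Algorithm~\ref{alg:buildTree}), root-to-leaf descent on a uniform $r\in[0,W)$ for sampling (Algorithm~\ref{alg:sample}), ancestor-path updates for weight changes, deletion by zeroing a leaf, and insertion by reusing an empty leaf slot, with the same observation that merges only shrink the population so no rebalancing is ever needed. Your closing remark that the $O(\log n)$ bound is per weight change, so a merge of $x,y$ costs $O((\deg(x)+\deg(y))\log n)$ in tree updates, is exactly how the paper accounts for it in its $O((n-k+1)(s\log n + \Delta\log n))$ total (and, as a minor bonus, your descent step correctly subtracts the left child's weight, where the paper's pseudocode has a typo subtracting $node.weight$).
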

\begin{remark}
	We designed this data structure independently, but found out that it has been known to the statistics community since 1980 \cite{wong1980efficient}. We note that this technique could have many applications in sampling from dynamic graphs.
\end{remark}
Algorithm \ref{ourAlgo} is our main summarization algorithm that takes as input $G$, integers $k$ (target summary size), $s$ (sample size), $w$ and $d$ (where $w=\frac{1}{\epsilon}$ and $d = \log \frac{1}{\delta}$ are parameters for count-min sketch). 

\begin{algorithm}[ht]
	\caption{: ScalableSumarization($G=(V,E)$, $k$, $w$, $d$)}
	\label{ourAlgo} 
	\begin{algorithmic}[1]
		\State ${\cal D} \gets \Call{buildSamplingTree}{V,W,1,n}$ \Comment{$W[1\ldots n]$ is initialize as $W[i] = w(v_i)$}
		\While{$G$ has  more than $k$ vertices}
		\State $samplePairs \gets\Call{GetSample}{{\cal D},s}$ \Comment{$s$ calls to Algorithm \ref{alg:sample}}
		\State $scores \gets\Call{ComputeApproxScore}{samplePairs}$ \Comment{Uses \eqref{eq:score} and Theorem \ref{thm:CMSketch}}
		\State  $bestPair \gets\Call{Max}{scores}$
		\State $\Call{Merge}{bestPair}$ \Comment{Lemma \ref{lem:merge}}
		\For{each neighbor $x$ of $u,v \in bestPair$}
		\State $\Call{UpdateWeight}{x,{\cal D}}$
		\EndFor
		\EndWhile
	\end{algorithmic}
\end{algorithm}

For each vertex $a$ we maintain a variable $D_a$ (Remark \ref{remark_compute_Score}). Hence the weight array can be initialized in $O(n)$ time using \eqref{eq:weight}. By Claim \ref{claim:dataStructure}, ${\cal D}$ can be populated in $O(n)$ time. By Claim \ref{claim:dataStructure}, Line 3 takes $O(s\log n)$ time, by Theorem \ref{thm:CMSketch} and \eqref{eq:score} Line 4 takes constant time per pair, and by Lemma \ref{lem:merge} merging can be performed in $O(\Delta)$ time. Since delete and update in ${\cal D}$ takes time $O(\log n)$ and the while loop is executed $n-k+1$ times, total runtime of Algorithm \ref{ourAlgo} is $O((n-k+1)(s\log n + \Delta\log n)$. Generally $k$ is $O(n)$ (typically a fraction of $n$) and in our experiments we take $s$ to be $O(\log n)$ and $O(\log^2 n)$. Furthermore, since many real world graphs are very sparse, ($\Delta$ which is worst case upper bound is constant), we get that overall complexity of our algorithm is $O(n\log^2 n)$ or $O(n\log^3n)$. 

\textbf{Data Structure for sampling:} We implement ${\cal D}$ as a balanced binary tree, where leaf corresponds to (super) node in the graph and stores weight and id of the node. Each internal node stores the sum of values of the two children. The value of the root is equal to $\sum_{i=1}^{n(t)} w(i)$. Furthermore, at each node in the graph we store a pointer to the corresponding leaf. We give pseudocode to construct this tree in Algorithm \ref{alg:buildTree} along with the structure of a tree node. By construction, it is clear that hight of the tree is $\lceil \log n \rceil$ and running time of building the tree and space requirement of ${\cal D}$ is $O(n)$.

\begin{minipage}[t]{.66\textwidth}
	\vskip-.4in
	\begin{algorithm}[H]
		\caption{BuildSamplingTree($A$,$W$,$st$,$end$)} \label{alg:buildTree}
		\begin{algorithmic}[1]
			\If{$A[st] = A[end]$}
			\State $leaf \gets \Call{CreateNode}$
			\State $leaf.weight \gets W[st]$
			\State $leaf.vertexID \gets A[st]$ 
			\State \Return $leaf$
			\Else
			\State $mid = \frac{end+st}{2}$
			\State $left \gets \Call{CreateNode}$
			\State $left \gets \Call{BuildSamplingTree}{A,W,st,mid}$
			\State $right \gets \Call{CreateNode}$
			\State $right \gets \Call{BuildSamplingTree}{A,W,mid+1,end}$
			\State $parent \gets \Call{CreateNode}$
			\State $parent.weight \gets left.weight + right.weight$
			\State $left.parent \gets parent$
			\State $right.parent \gets parent$
			\State \Return $parent$
			\EndIf
		\end{algorithmic}
	\end{algorithm}
\end{minipage}
\begin{minipage}[t]{.28\textwidth}
	\vskip-.4in
	\begin{algorithm}[H]
		\renewcommand{\thealgorithm}{}
		\floatname{algorithm}{Structure}
		\caption{TreeNode}
		\begin{algorithmic}
			\State {\bf int } $vertexID$
			\State {\bf double} $weight$
			\State {\bf TreeNode} $*left$
			\State {\bf TreeNode} $*right$
			\State {\bf TreeNode} $*parent$
		\end{algorithmic}
	\end{algorithm}
\end{minipage}

The procedure to sample a vertex with probability proportional to its weight using ${\cal D}$ is given in Algorithm \ref{alg:sample}. This takes as input a uniform random number $r \in [0,\sum_{i=1}^{n(t)} w(i)]$. Since it traverses a single path from root to leaf, the runtime of this algorithm is $O(\log n)$. The update procedure is very similar, whenever weight of a node changes, we start from the corresponding leaf (using the stored pointer to leaf) and change weight of that leaf. Following the parent pointers, we update weights of internal nodes to the new sum of weights of children. Deleting a node is very similar, it amounts to updating weight of the corresponding leaf to $0$. Inserting a node (the super node representing the merged nodes) is achieved by changing the weight of the first empty leaf in ${\cal D}$. A reference to first empty node is maintained as a global variable.
\begin{algorithm}
	\renewcommand{\thealgorithm}{3}
	\caption{:GetLeaf($r$,$node$)}\label{alg:sample}
	\begin{algorithmic}[1]	
		\If{$node.left=$ {\bf NULL} $\wedge  node.right=$ {\bf NULL}}
		\State  \Return $node.vertexID$
		\EndIf	
		\If{$r < node.left.weight$}
		\State \Return $\Call{GetLeaf}{r,node.left}$ 
		\Else
		\State \Return $\Call{GetLeaf}{r-node.weight,node.right}$ 
		\EndIf	
	\end{algorithmic}
\end{algorithm}
\section{Evaluation}\label{section:experiments}
We evaluate performance of our algorithm in terms of runtime, reconstruction error and accuracies of answers to queries on standard benchmark graphs \footnote{\url{http://snap.stanford.edu/}}. We demonstrate that our algorithm substantially outperforms existing solutions, GraSS\cite{lefevre2010grass} and $S2L$\cite{riondato2014graph} in terms of quality while achieving order of magnitude speed-up over them. Our Java Implementation is available at \footnote{\url{https://bitbucket.org/M\_AnwarBeg/scalablesumm/}}. We also report the accuracies in query answered based on summaries only and show that error is very small and we save a lot of time. Errors reported are normalized by $|V|$. All runtimes are in seconds.

\begin{figure}[t]
	\centering
	\includegraphics[scale=.35,page=1]{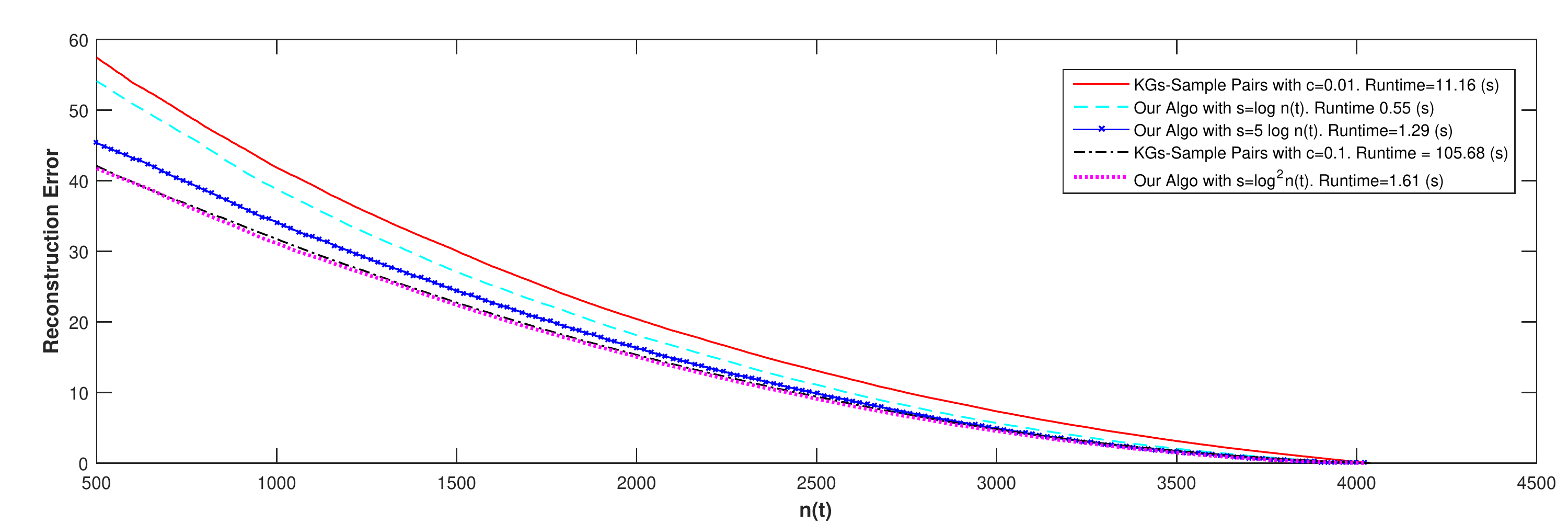} 
	\caption{Comparison between $k$-Gs-SamplePairs on ego-Facebook. Runtimes reported are at k=500.} \label{fig:compgrass}
\end{figure}
From Figure \ref{fig:compgrass}, it is clear that the quality of our summaries compares well with that of $k$-Gs-SamplePairs but with much smaller sample size.  We report results for $s\in \{\log n(t), 5\log n(t), \log^2 n(t)\}$, with exact score computation.  Indeed with sublinear sample size $O(\log n(t))$ and $O(\log^2 n(t))$, our reconstruction error is less than $k$-Gs-SamplePairs with sample size $0.01n(t)$. Although for $n(t)$ significantly smaller than $|V|$, there is a small difference in size of logarithmic and linear sample, but we benefit from our logarithmic sample size for large $n(t)$.

\begin{table}[ht]
	\centering
	\scriptsize
	\renewcommand\tabcolsep{4pt}
	\begin{tabular}{c c c c c c c c c c c c c c }
		\hline		
		&\multicolumn{2}{c}{}&\multicolumn{2}{c}{Error}&\multicolumn{2}{c}{RunTime}&$S2L$\\ 					
		\cmidrule(lr){4-5} 	 \cmidrule(lr){6-7} 		
		Graph&k&w&$RE$&$(l_2)^2$&avg&std&$(l_2)^2$\\ \hline
		\multirow{6}{*}{\shortstack[c]{ego-Facebook\\ $|V|$=4,039 \\ $|E|$=88,234 }}&\multirow{3}{*}{1000}&50&69.98&35.00&0.83&0.01&	\multirow{3}{*}{581}\\ 
		&&100&57.27&28.66&0.94&0.05\\ 
		&&-\tablefootnote{Represents score computation without approximation using Equation \ref{eq:score}.}&38.98&19.51&1.01&0.08&\\ 
		\cmidrule(lr){2-8} 			
		&\multirow{3}{*}{1500}&50&58.17&29.09&0.82&0.05&	\multirow{3}{*}{501}\\ 
		&&100&40.05&20.04&0.83&0.03\\ 
		&&-&27.14&13.58&0.87&0.04\\\hline			
		\multirow{6}{*}{\shortstack[c]{email-Enron\\ $|V|$=36,692 \\$|E|$=183,831 }}&\multirow{3}{*}{10000}&50&6.03&3.01&1.95&0.05&	\multirow{3}{*}{72}\\ 
		&&100&5.84&2.92&2.14&0.02\\ 
		&&-&5.82&2.91&2.21&0.07\\ 
		\cmidrule(lr){2-8} 
		&\multirow{3}{*}{14000}&50&4.25&2.13&1.69&0.05&	\multirow{3}{*}{54}\\ 
		&&100&4.16&2.08&1.77&0.02\\ 
		&&-&4.15&2.08&1.81&0.01\\ \hline			
		\multirow{6}{*}{\shortstack[c]{web-Stanford\\ $|V|$=281,903 \\$|E|$=1,992,636 }}&\multirow{3}{*}{2000}&50&24.11&12.06&65.97&0.02&	\multirow{3}{*}{48}\\ 
		&&100&24.01&12.01&64.41&0.82\\ 
		&&-&24.01&12.05&73.48&0.28\\ 
		\cmidrule(lr){2-8} 
		&\multirow{3}{*}{10000}&50&23.65&11.83&59.87&0.15&	\multirow{3}{*}{38}\\ 
		&&100&23.61&11.81&66.21&0.67\\ 
		&&-&21.13&10.57&59.34&0.28\\ \hline
		\multirow{6}{*}{\shortstack[c]{amazon0601\\ $|V|$=403,394 \\ $|E|$=2,443,408 }}&\multirow{3}{*}{2000}&50&24.11&12.06&147.3&3.72&	\multirow{3}{*}{53}\\ 
		&&100&24.10&12.05&158.03&1.59\\ 
		&&-&24.10&12.05&175.29&5.81\\ 
		\cmidrule(lr){2-8} 
		&\multirow{3}{*}{8000}&50&23.77&11.89&143.83&3.58&	\multirow{3}{*}{51}\\ 
		&&100&23.74&11.87&154.34&1.92\\ 
		&&-&23.73&11.87&171.00&5.63\\ \hline
	\end{tabular}
	\caption{Comparison of $S2L$ with our algorithm using different count min sketch widths. The numbers reported here for $S2L$ are as given by authors in \cite{riondato2014graph}.} \label{tbl:comparisonS2L}
\end{table}

In Table \ref{tbl:comparisonS2L}, we present  reconstruction errors on moderately large sized graphs. Even though $S2L$ is suitable for Euclidean errors, our algorithm still outperforms $S2L$ because by minimizing $RE$ we preserve the original structure of the graph. We use $s=\log n(t)$, $w \in \{50,100\}$ and $d = 2$. We also report results without approximation in score computation. Note that we are able to generate summaries with much smaller runtime on a less powerful machine, (Intel(R) Core i5 with 8.00 GB RAM and 64-bit OS) compared to one reported in \cite{riondato2014graph}. 
\begin{table}[h]
	\centering
	\scriptsize		
	\renewcommand\tabcolsep{5pt}
	\begin{tabular}{c c c c c c c c c c c c c c}\hline		
		&&\multicolumn{3}{c}{as-Skitter     }&\multicolumn{3}{c}{wiki-Talk	}&\multicolumn{3}{c}{com-Youtube	}\\
		&&\multicolumn{3}{c}{$|E|$= 1,696,415   }&\multicolumn{3}{c}{$|E|$ = 2,394,385}&\multicolumn{3}{c}{$|E|$ =1,157,828} \\	
		&&\multicolumn{3}{c}{$|V|$= 11,095,298  }&\multicolumn{3}{c}{$|V|$ = 4,659,565	}&\multicolumn{3}{c}{$|V|$ = 2,987,624	}\\
		\cmidrule(lr){3-5}	\cmidrule(lr){6-8} \cmidrule(lr){9-11}			
		\multicolumn{2}{c}{Parameters}&&\multicolumn{2}{c}{Time}&&\multicolumn{2}{c}{Time}&&\multicolumn{2}{c}{Time}\\
		\cmidrule(lr){1-2} 	\cmidrule(lr){4-5} 	\cmidrule(lr){7-8} \cmidrule(lr){10-11}
		k$\times(10^3)$&w&$RE$&avg&std&$RE$&avg&std&$RE$&avg&std\\
		\multirow{4}{*}{10}&50& 25.42&521.43&12.93& 7.28&311.10&5.68& 9.98&207.38&4.66\\ 
		&100& 24.81&516.03&22.97& 7.07&328.19&2.06& 9.64&222.22&7.17\\ 
		&200& 24.35&559.91&13.07& 6.82&363.37&8.63& 9.30&251.94&8.28\\ 
		&-& 23.81&649.82&20.44& 6.72&319.95&28.91& 9.26&242.58&13.85\\ \hline
		\multirow{4}{*}{50}&50& 23.49&481.40&14.11& 5.77&285.89&5.56& 7.49&184.67&3.86\\ 
		&100& 21.78&480.85&23.54& 5.65&299.98&1.97& 7.13&195.85&6.41\\ 
		&200& 20.90&524.94&12.88& 5.63&329.39&8.95& 7.09&215.87&7.23\\ 
		&-& 20.77&591.35&19.36& 5.62&273.24&24.05& 7.08&199.48&10.85\\ \hline
		\multirow{4}{*}{100}&50& 21.28&436.84&11.52& 5.12&266.15&4.83& 5.90&160.11&3.51\\ 
		&100& 18.90&445.27&23.18& 5.09&276.32&2.77& 5.82&167.67&5.22\\ 
		&200& 18.48&486.88&13.14& 5.08&303.44&8.60& 5.81&183.64&5.81\\ 
		&-& 18.42&535.02&18.90& 5.08&248.73&20.44& 5.81&164.80&9.87\\ \hline
		\multirow{4}{*}{250}
		&50& 15.34&332.27&9.58& 4.23&223.65&3.85& 3.91&103.25&3.97\\ 
		&100& 13.79&350.47&21.77& 4.22&232.39&1.81& 3.90&107.79&1.03\\ 
		&200& 13.68&376.89&11.86& 4.21&256.05&6.74& 3.89&118.73&4.60\\ 
		&-& 13.65&392.58&13.48& 4.21&203.93&18.21& 3.89&98.70&4.89\\ \hline
	\end{tabular}
	\caption{Quality in terms of $RE$ of summary produced by our algorithm on large sized graph.} \label{tbl:resultsLargeGraph}
\end{table}

Table \ref{tbl:resultsLargeGraph} contains quality and runtime for very large graphs, on which none of the previously proposed solutions were applicable. We use $s=\log n(t)$, $w \in \{50,100,200\}$ and $d = 2$. We get some reduction in running times by approximating the score while the quality of summaries remains comparable. Note that large values of $w$ result in increased runtime without any improvement in quality. This is so because  the complexity of exact score evaluation depends on maximum degree, which in real world graphs is small. 
\begin{table}
	\scriptsize
	\begin{minipage}{0.5\linewidth}
		\centering
		\begin{tabular}{c c c c c c c c c  }			\multicolumn{9}{c}{ego-Facebook} \\ \hline
			
			&\multicolumn{2}{c}{}&\multicolumn{2}{c}{}&		\multicolumn{2}{c}{Eigenvector-}&&\\
			
			&\multicolumn{2}{c}{} &	\multicolumn{2}{c}{}&\multicolumn{2}{c}{Centrality}&&\\
			&\multicolumn{2}{c}{}&\multicolumn{2}{c}{Degree}&			\multicolumn{2}{c}{($\times 10^{-5}$)} &\\
			\cmidrule(lr){4-5}\cmidrule(lr){6-7}&\multicolumn{2}{c}{}&\multicolumn{2}{c}{ }&\multicolumn{2}{c}{}&Triangle\\    
			&k&w&avg&stdev&avg&stdev&Density\\\hline
			&\multirow{3}{*}{500}&50&22.50&29.70&6.37&8.42&-0.89\\ 
			&&100&14.95&26.58&4.24&7.53&-0.76\\ 
			&&-&12.01&12.70&3.40&3.60&-0.28\\ 
			\cmidrule(lr){2-8}
			&\multirow{3}{*}{1000}&50&17.66&25.68&5.00&7.28&-0.84\\ 
			&&100&10.67&24.19&3.02&6.85&-0.58\\ 
			&&-&7.62&9.41&2.16&2.67&-0.15\\ 
			\cmidrule(lr){2-8}
			&\multirow{3}{*}{1500}&50&13.22&22.19&3.75&6.29&-0.72\\ 
			&&100&7.04&18.32&2.00&5.19&-0.41\\ 
			&&-&4.50&5.67&1.28&1.61&-0.08\\ \hline
		\end{tabular}
	\end{minipage} 
	\begin{minipage}{0.5\linewidth}
		\scriptsize
		\begin{tabular}{c c c c c c c c c  }

			\multicolumn{9}{c}{email-Enron} \\\hline

			&\multicolumn{2}{c}{}&\multicolumn{2}{c}{}&		\multicolumn{2}{c}{Eigenvector-}&&\\

			&\multicolumn{2}{c}{}&\multicolumn{2}{c}{}&		\multicolumn{2}{c}{Centrality}&&\\
			&\multicolumn{2}{c}{}&\multicolumn{2}{c}{Degree}&			\multicolumn{2}{c}{($\times 10^{-5}$)}&\\
			\cmidrule(lr){4-5}\cmidrule(lr){6-7}
			&\multicolumn{2}{c}{}&\multicolumn{2}{c}{ }&\multicolumn{2}{c}{}&Triangle\\			
			&k&w&avg&stdev&avg&stdev&Density\\\hline			
			&\multirow{4}{*}{4000}&50&2.49&8.60&0.34&1.17&-0.37\\ 
			&&100&1.94&4.26&0.26&0.58&-0.19\\ 
			&&-&1.91&3.09&0.26&0.42&-0.16\\ 
			\cmidrule(lr){2-8}
			&\multirow{4}{*}{6000}&50&1.56&4.70&0.21&0.64&-0.18\\ 
			&&100&1.37&2.33&0.19&0.32&-0.12\\ 
			&&-&1.38&2.24&0.19&0.31&-0.11\\ 
			\cmidrule(lr){2-8}
			&\multirow{4}{*}{8000}&50&1.15&3.10&0.16&0.42&-0.11\\ 
			&&100&1.06&1.77&0.14&0.24&-0.08\\ 
			&&-&1.04&1.52&0.14&0.21&-0.08\\ \hline
		\end{tabular}
	\end{minipage} \caption{Error in queries computed by summaries generated by our algorithm. Absolute average error is reported for degree and centrality query. For triangle density, relative error is reported.}  \label{tbl:queryemail}
\end{table}

In Table \ref{tbl:queryemail}, we tabulate answers to queries that are computed from summaries only. We report mean absolute errors in estimated degrees and eigenvector-centrality scores. For triangle density we report relative error, calculated as $\frac{\bar{t(G)}-t(G)}{t(G)}$. In all cases query answers are very close to the true values, signifying the fact that our summaries do preserve the essential structure of the graph.
\section{Conclusion}\label{section:conclusion}
In this work we devise a sampling based efficient approximation algorithm for graph summarization. We derive a closed form for measuring suitability of a pair of vertex for merging. We approximate this score with theoretical guarantees on error. Another major contribution of this work is the efficient weighted sampling scheme to improve the quality of samples. This enables us to work with substantially smaller sample sizes without compromising summary quality. Our algorithm is scalable to large graphs on which previous algorithms are not applicable. Extensive evaluation on a variety of real world graphs show that our algorithm significantly outperforms existing solutions both in quality and time complexity.
\bibliographystyle{splncs}     
\bibliography{649}  

\begin{thebibliography}{10}

\bibitem{lefevre2010grass}
LeFevre, K., Terzi, E.:
\newblock Grass: Graph structure summarization.
\newblock In: {SIAM} International Conference on Data Mining, {SDM}. (2010)
  454--465

\bibitem{riondato2014graph}
Riondato, M., Garc{\'{\i}}a{-}Soriano, D., Bonchi, F.:
\newblock Graph summarization with quality guarantees.
\newblock In: {IEEE} International Conference on Data Mining, {ICDM}. (2014)
  947--952

\bibitem{storer1988data}
Storer, J.:
\newblock Data compression.
\newblock Elsevier (1988)

\bibitem{rissanen1978modeling}
Rissanen, J.:
\newblock Modeling by shortest data description.
\newblock Automatica \textbf{14}(5) (1978)  465--471

\bibitem{navlakha2008graph}
Navlakha, S., Rastogi, R., Shrivastava, N.:
\newblock Graph summarization with bounded error.
\newblock In: {ACM} International Conference on Management of Data, {SIGMOD}.
  (2008)  419--432

\bibitem{khan2015set}
Khan, K., Nawaz, W., Lee, Y.:
\newblock Set-based approximate approach for lossless graph summarization.
\newblock Computing \textbf{97}(12) (2015)  1185--1207

\bibitem{koutra2014vog}
Koutra, D., Kang, U., Vreeken, J., Faloutsos, C.:
\newblock {VOG:} summarizing and understanding large graphs.
\newblock In: {SIAM} International Conference on Data Mining, {SDM}. (2014)
  91--99

\bibitem{zhuang2016data}
Zhuang, H., Rahman, R., Hu, X., Guo, T., Hui, P., Aberer, K.:
\newblock Data summarization with social contexts.
\newblock In: {ACM} International Conference on Information and Knowledge
  Management, {CIKM}. (2016)  397--406

\bibitem{toivonen2011compression}
Toivonen, H., Zhou, F., Hartikainen, A., Hinkka, A.:
\newblock Compression of weighted graphs.
\newblock In: {ACM} International Conference on Knowledge Discovery and Data
  Mining, {SIGKDD}. (2011)  965--973

\bibitem{fan2012query}
Fan, W., Li, J., Wang, X., Wu, Y.:
\newblock Query preserving graph compression.
\newblock In: {ACM} International Conference on Management of Data, {SIGMOD}.
  (2012)  157--168

\bibitem{liu2012approximate}
Liu, Z., Yu, J.X., Cheng, H.:
\newblock Approximate homogeneous graph summarization.
\newblock {Journal of Information Processing} \textbf{20}(1) (2012)  77--88

\bibitem{boldi2004webgraph}
Boldi, P., Vigna, S.:
\newblock The webgraph framework {I:} compression techniques.
\newblock In: International Conference on World Wide Web, {WWW}. (2004)
  595--602

\bibitem{adler2001towards}
Adler, M., Mitzenmacher, M.:
\newblock Towards compressing web graphs.
\newblock In: Data Compression Conference, {DCC}. (2001)  203--212

\bibitem{chierichetti2009compressing}
Chierichetti, F., Kumar, R., Lattanzi, S., Mitzenmacher, M., Panconesi, A.,
  Raghavan, P.:
\newblock On compressing social networks.
\newblock In: {ACM} International Conference on Knowledge Discovery and Data
  Mining, {SIGKDD}. (2009)  219--228

\bibitem{liu2016graph}
Liu, Y., Dighe, A., Safavi, T., Koutra, D.:
\newblock A graph summarization: A survey.
\newblock arXiv preprint arXiv:1612.04883 (2016)

\bibitem{newman2004finding}
Newman, M.E., Girvan, M.:
\newblock Finding and evaluating community structure in networks.
\newblock Physical review E \textbf{69}(2) (2004)  026113

\bibitem{clauset2004finding}
Clauset, A., Newman, M.E., Moore, C.:
\newblock Finding community structure in very large networks.
\newblock Physical review E \textbf{70}(6) (2004)  066111

\bibitem{white2005spectral}
White, S., Smyth, P.:
\newblock A spectral clustering approach to finding communities in graphs.
\newblock In: SIAM International Conference on Data Mining, {SDM}. (2005)
  274--285

\bibitem{flake2000efficient}
Flake, G.W., Lawrence, S., Giles, C.L.:
\newblock Efficient identification of web communities.
\newblock In: {ACM} International Conference on Knowledge Discovery and Data
  Mining, {SIGKDD}. (2000)  150--160

\bibitem{motwani2010randomized}
Motwani, R., Raghavan, P.:
\newblock Randomized algorithms.
\newblock Chapman \& Hall/CRC (2010)

\bibitem{CormodeM05}
Cormode, G., Muthukrishnan, S.:
\newblock An improved data stream summary: the count-min sketch and its
  applications.
\newblock Journal of Algorithms \textbf{55}(1) (2005)  58--75

\bibitem{wong1980efficient}
Wong, C.K., Easton, M.C.:
\newblock An efficient method for weighted sampling without replacement.
\newblock {SIAM} Journal on Computing \textbf{9}(1) (1980)  111--113

\end{thebibliography}
\end{document}